%% file: main-characterization.tex
\documentclass[a4paper,UKenglish,cleveref, autoref, thm-restate]{lipics-v2021}

\usepackage[utf8]{inputenc}
\usepackage[T1]{fontenc}
\usepackage{lmodern}
\usepackage{paralist}

\usepackage{tikz}

\usetikzlibrary{automata}
\usetikzlibrary {arrows.meta,automata,positioning}

\usepackage{amsmath} 
\usepackage{amssymb}
\usepackage{mathrsfs}
\usepackage{amsthm}
\usepackage{amsfonts}

\usepackage{graphicx}
\usepackage{algorithm}
\usepackage{algorithmic}
\usepackage{hyperref}
\usepackage{stmaryrd}
\usepackage{mathtools}
\usepackage{todonotes}
\hypersetup{hidelinks,breaklinks}
\usepackage{cleveref}
\usepackage{thm-restate}

\usepackage[normalem]{ulem}

\usepackage[notion, quotation, electronic]{knowledge}

\knowledgestyle{notion}{color=darkgray}

\input{pieces_of_knowledge.default.kl}
\input{macros.tex}

\hideLIPIcs
\nolinenumbers

\knowledgenewrobustcmd\Out{\cmdkl{\mathrm{Out}}}
\knowledgenewrobustcmd\In{\cmdkl{\mathrm{In}}}

\knowledgenewcommand{\Lang}{\cmdkl{\L}}

\knowledgenewcommand{\cc}{\mathbin{\cmdkl{\cdot}}}

\knowledgenewrobustcmd\skel{\cmdkl{\mathrm{skel}}}

\knowledgenewcommand\LangS{\cmdkl{\L}}

\knowledgenewcommand{\Ai}{\cmdkl{A_D}}
\knowledgenewcommand{\De}{\cmdkl{\Delta}}

\knowledgenewrobustcmd\best[1][t]{\cmdkl{\mathrm{best}_{#1}}}
\knowledgenewrobustcmd\leader[1][t]{\cmdkl{\mathrm{leader}_{#1}}}

\knowledgenewrobustcmd\stateleq{\mathrel{\cmdkl{\sqsubseteq}}}
\newrobustcmd\stategeq{\mathrel{\kl[\stateleq]{\sqsupseteq}}}
\newrobustcmd\statel{\mathrel{\kl[\stateleq]{\sqsubset}}}
\newrobustcmd\stateg{\mathrel{\kl[\stateleq]{\sqsupset}}}

\knowledgenewrobustcmd\transleq{\mathrel{\cmdkl{\leqslant}}}
\knowledgenewrobustcmd\transgeq{\mathrel{\kl[\transleq]{\geqslant}}}
\knowledgenewrobustcmd\translt{\mathrel{\kl[\transleq]{<}}}

\knowledgenewcommand{\Si}{\cmdkl{\Sigma}}
\knowledgenewcommand{\ti}{\cmdkl{t_{\stateleq}}}
\knowledgenewcommand{\Sii}{\cmdkl{\Sigma_{\stateleq}}}
\knowledgenewcommand{\unit}[1][\Sigma]{\cmdkl{1_{#1}}}

\knowledgenewcommand{\ska}[1][t]{\cmdkl{\delta_{#1}}}
\knowledgenewrobustcmd\red{\cmdkl{\mathit{red}}}
\knowledgenewrobustcmd\green{\cmdkl{\mathit{green}}}

\knowledgenewcommand{\io}{\cmdkl{\iota}}

\knowledgenewrobustcmd\inter[1]{\cmdkl{\mathit{[#1]}}}

\knowledgenewrobustcmd\eps{\cmdkl{\varepsilon}}
\knowledgenewrobustcmd\Pieps{\cmdkl{\Pi_\varepsilon}}

\knowledgenewcommand\resleqL{\mathrel{\cmdkl{\leqslant_L}}}

\title{An algebraic characterisation of Eve-positional languages}

\titlerunning{An algebraic characterisation of Eve-positional languages} 

\author{Thomas Colcombet}{Université Paris Cité, CNRS, IRIF, France}{thomas.colcombet@irif.fr}{https://orcid.org/0000-0001-6529-6963}{}

\author{Olivier Idir}{Université Paris Cité, CNRS, IRIF, France}{olivier.idir@ens-lyon.org}{https://orcid.org/0009-0003-3848-8515}{}

\authorrunning{T. Colcombet and O. Idir} 

\Copyright{Thomas Colcombet and Olivier Idir} 

\ccsdesc[100]{Theory of computation~Automata over infinite objects} 

\keywords{One-player-positional language, $\omega$-regular language, verification, synthesis} 

\acknowledgements{We thank Antonio Casares and Pierre Ohlmann for their enthusiastic answers concerning their article.}

\begin{document}
	\maketitle

	\begin{abstract}
		We present a new algebraic characterisation of Eve-positionality for $\omega$-regular languages.
		It involves only a limited number of elementary local properties to be checked.
		
		An $\omega$-regular language is Eve-positional if, in all games with this language as objective, the existential player  (Eve) can play optimally without keeping any information concerning the history of the moves seen so far.
		This notion plays a crucial role in verification, automata theory and synthesis.
		
		Our proof heavily relies on a recent result of Casares and Ohlmann which states several characterisations of Eve-positionality for $\omega$-regular languages.
		More precisely, it relies on their a 1-to-2 player lift result:  for an $\omega$-regular language, being Eve-positionally over all finite Eve-only arenas suffices for being Eve-positional over all two-player arenas.	
	\end{abstract}

	\section{Introduction}
	\label{section:introduction}
	\input{intro-c.tex}

	\section{General definitions}\label{sec:defs}
	\input{defs-c.tex}

	\section{Characterisation of Eve-positional languages}\label{sec:charac}
	\input{characterization.tex}
	
	\bibliography{biblio}
\end{document}

%% file: macros.tex
\newcommand{\NN}{\mathbb{N}}

\newcommand{\cP}{\mathcal{P}}

\newcommand{\G}{\mathcal{G}}

\renewcommand{\L}{\mathcal{L}}

\renewcommand{\S}{\mathcal{S}}

\knowledgenewcommand{\brackets}[1]{\cmdkl{\mathit{[#1]}}}
\knowledgenewcommand{\floor}[1]{\left \lfloor #1 \right \rfloor}
\knowledgenewcommand{\ceil}[1]{\left\lceil #1 \right\rceil}
\knowledgenewcommand{\restricted}[1]{_{\cmdkl{\mid}  #1 }}
\knowledgenewrobustcmd\dom{\cmdkl{\mathrm{dom}}}
\knowledgenewrobustcmd\img{\cmdkl{\mathrm{img}}}
\knowledgenewrobustcmd\invImage[1]{^{\cmdkl{-1}}(#1)}
\knowledgenewrobustcmd\pref[1]{\cmdkl{_{\mid #1}}}
\knowledgenewrobustcmd\trans[1]{\mathrel{\cmdkl{\xrightarrow{#1}}}}
\knowledgenewrobustcmd\transpath[1]{\mathrel{\cmdkl{\xrightarrow{#1}}}}
\newrobustcmd\transsub[2]{\mathrel{\kl[\trans]{\xrightarrow{#1}_{#2}}}}
\knowledgenewrobustcmd\ptrans[1]{\mathrel{\cmdkl{\overset{#1}\rightsquigarrow}}}

%% file: intro-c.tex

\subsection*{Context}
The problem of reactive synthesis is the following: given a system equipped with a formal specification and interacting with its environment for a possibly infinite amount of time, we want to design a controller that can react to the environment while ensuring that its specification is met. 

A formalism frequently used for this is the one of games on graphs. In these games, two players, Eve and Adam, move a token along the edges of an edge-coloured directed graph called an arena. Its vertices are partitioned between those belonging to Eve and those belonging to Adam, and when the token arrives in a vertex, its owner chooses the next edge, which the token will follow. This interaction continues infinitely, thus producing an infinite path through the arena and a corresponding infinite sequence of colours $w$ (also called an $\omega$-word). A play is won by Eve if the corresponding word belongs to some set $W$, described as her objective. Else, it is won by Adam. Over the games we consider, it always stands that one of the players has a winning strategy, which describes a way to win the game even when this strategy is known in advance by the opponent.

The interaction between the system and its environment can be modeled by such a game, where the specification provides the winning objective $W$ \cite{Buchi_Landweber_synthesis}. In such a context, the complexity of the player's winning strategy is crucial, as games with simple winning strategies are generally easier to solve algorithmically and result in controllers that can be represented more succinctly.

\noindent \textbf{$\omega$-regular languages.}
An "$\omega$-regular language" is a set of "$\omega$-words" that can be described by many equivalent formalisms (e.g., automata that can be either deterministic, non-deterministic, or alternating; one-way or two-way; with Muller, Parity, Rabin, or Streett acceptance conditions; or by monadic-second-order logic). The class of "$\omega$-regular languages" plays a pivotal role in many decision procedures and mathematical arguments related to verification, automata theory, controller synthesis, etc.

\noindent \textbf{Eve-positionality.}
One of the simplest cases, when attempting to solve a game, is when the winning strategy is positional. This means that the corresponding player does not need any memory of the past to play optimally: they can make their decision simply looking at the current state of the game. Note that this memoryless property need not be the same for both players. In this work, we are considering the objectives $W$ such that over all graphs, if Eve wins, she can do so with a positional strategy. They are called "Eve-positional languages".\\
Until recently, little was known about this class. Major progress was made in the last few years, but there still lacked a convenient formalism to handle them, which we propose here.

\subsection*{Contribution}
\AP In this paper, we propose a new algebraic characterisation of "$\omega$-regular" "Eve-positional languages".
This characterization does not rely on an automaton recognising the language, but describes the language's structure directly.
\AP It consists of three conditions, called the ""local preference properties"", each one of the form ``if $u_1 \in L$ (and possibly $u_2 \in L$), either $v_1\in L$ or $v_2\in L$'':
\begin{restatable}{theorem}{EveposiLocal}\label{thm:local-posi}\AP
	An "$\omega$-regular language" ~$L\subseteq C^\omega$ is "Eve-positional" if and only if, for all $u, u'\in C^*$, $v,v'\in C^+$, and $w,w'\in C^\omega$, the following "local preference properties" hold:
	\begin{asparaenum}
		\item if $uw\in L$ and $u'w'\in L$, then $uw' \in L$ or $u'w \in L$,
		\item if $uvw \in L$, then $uv^\omega \in L$ or $uw \in L$, and
		\item if $u(vv')^\omega \in L$, then $uv^\omega \in L$ or $uv'^\omega \in L$.
	\end{asparaenum}
\end{restatable}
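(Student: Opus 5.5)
The plan splits into the two directions. Necessity goes by building, for each local preference property, a small finite Eve-only arena in which Eve wins but no positional strategy wins unless one of the two conclusions holds. Sufficiency goes through the 1-to-2 player lift of Casares and Ohlmann. It therefore suffices to show that in every finite Eve-only arena where Eve wins, she wins positionally. The work then lies in combining the three properties with $\omega$-regularity, through a finite syntactic $\omega$-semigroup or a deterministic parity automaton, into a positional strategy.

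\textbf{Necessity.} The arenas are as follows.
\begin{asparaenum}
\item For the first property: from the initial vertex, two paths labelled $u$ and $u'$ lead to a common Eve vertex $x$. From $x$, Eve may leave along a path labelled $w$ or along one labelled $w'$. Both words are ultimately periodic, or else we realise them by an infinite tree; we first reduce to ultimately periodic words using $\omega$-regularity, since $L$ is determined by its ultimately periodic words. The two entry paths are chosen by Eve at the start, so Eve wins by committing to $uw$. To make positionality bite, the entry choice must belong to Adam, or both $u$ and $u'$ must be forced. We use the two-player version, which is legitimate for necessity. A positional strategy fixes a single exit at $x$, so one of $uw'$ or $u'w$ must be in $L$.
\item For the second property: after $u$, a vertex $x$ offers Eve a loop labelled $v$ back to $x$, or an exit labelled $w$. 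The play $uvw$ is winning, since Eve loops once and then exits. A positional strategy yields either $uv^\omega$ or $uw$.
\item For the third property: after $u$, vertex $x$ has two cycles, labelled $v$ and $v'$. Alternating between them yields $u(vv')^\omega\in L$, while positional strategies yield only $uv^\omega$ or $uv'^\omega$.
\end{asparaenum}

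\textbf{Sufficiency.} Fix a finite Eve-only arena won by Eve from some vertex; since the arena is Eve-only, this means some infinite path from it is labelled in $L$. We proceed by induction on the number of vertices having out-degree at least two, and aim to remove an edge without losing the win. Take a winning path. If it visits a vertex $x$ infinitely often using at least two different outgoing edges, we split it at $x$. Using $\omega$-regularity, we replace the path by an ultimately periodic one $u(c_1\cdots c_k)^\omega$ in which each $c_i$ is a simple cycle through $x$. Iterating the third property lets us drop cycle labels one by one until a single cycle remains, which yields a path using only one outgoing edge at $x$ infinitely often. The second property handles finitely many visits: if the path is $u v w$ with $v$ a cycle at $x$, we either loop forever or cut the cycle. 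Together these give a winning lasso that is a simple path followed by a simple cycle, which is a positional play.

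This uniform argument does not yet handle initial vertices simultaneously, and that is the main obstacle: a positional strategy must win from \emph{all} winning vertices at once. Lasso paths from different starting vertices may disagree at a shared vertex $x$, having histories $u,u'$ and continuations $w,w'$. The first property gives that one continuation works for both prefixes, though only pairwise. We therefore order the continuations from $x$ by "is at least as good for every prefix". The first property makes this order total on the relevant residuals, in the sense that for any two of them one dominates. We then choose at each vertex a maximal continuation and proceed by induction on vertices, fixing choices one vertex at a time. Checking that this greedy choice remains consistent and that the resulting infinite plays stay winning will need care. The residuals are finitely many by regularity, which makes a maximum exist, and combining dominance with the second and third properties is needed to ensure that the play generated after committing at several vertices is still in $L$.
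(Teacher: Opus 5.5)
\textbf{The uniformity step is missing.} Your necessity direction is the paper's, including giving Adam the choice between $u$ and $u'$ for the first property. Your lasso reduction via the second and third properties does give \emph{non-uniform} positionality in finite Eve-only arenas, once you make termination explicit. For instance, take a winning lasso minimising prefix length plus cycle length; note that the closed walks obtained by cutting at $x$ need not be simple. The gap is uniformity, which \Cref{theorem:CO} requires as input and which you leave open.

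This is the crux, not a technicality. The language $\{w : w_0=w_1\}$ over $\{a,b\}$ satisfies the second and third properties, so your non-uniform argument applies to it. Yet it is not Eve-positional: Adam plays $a$ or $b$, and Eve must copy it.

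\textbf{Your greedy plan fails as stated.} Take $L=C^*aC^\omega$, which satisfies all three properties. Use the arena with edges $x\xrightarrow{a}z$, $x'\xrightarrow{b}z$, a $b$-loop on $z$, $z\xrightarrow{c}t$, and an $a$-loop on $t$. Every path from $x$ starts with $a$, so all of them are $\leqslant_L$-maximal, and you may commit to $ab^\omega$. This fixes the $b$-loop at $z$. Then $x'$, which wins via $bca^\omega$, now loses. In general, a maximal continuation from one vertex can have a non-maximal suffix from an intermediate vertex, and committing to it harms other starting points routed through that vertex.

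\textbf{How the paper closes it.} The paper takes a finite-memory strategy winning from the whole winning region $R$ with the fewest memory states. If two memory states $p\neq q$ lie over the same vertex, redirecting all edges entering $q$ to $p$ (or vice versa) gives a smaller strategy. A case analysis on whether $p$ reaches $q$ and/or $q$ reaches $p$ then shows that one of the two merges still wins from all of $R$. It uses totality of $\leqslant_L$ (from the first property) together with the second and third properties.

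\textbf{Alternative fix within your framing.} Your edge-removal framing can also be completed directly. Suppose that deleting either of two out-edges $e_1,e_2$ of $x$ makes some $y_1$, respectively $y_2$, losing. Then positional winning lassos from $y_1$ and $y_2$ use only $e_1$, respectively only $e_2$, at $x$. Split them at their first visit to $x$ as $u_1w_1$ and $u_2w_2$. The first property gives $u_1w_2\in L$ or $u_2w_1\in L$. That is a winning play from $y_1$ avoiding $e_1$, or from $y_2$ avoiding $e_2$, a contradiction. Hence an edge can always be deleted while keeping all of $R$ winning, and induction on the number of edges yields a uniform positional strategy.
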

The fact that the "local preference properties" are necessary is easy to establish.  The proof of the other --- difficult --- direction makes crucial use of two results concerning $\omega$-regular games.
The first one is the classical result of finite memory-determinacy of $\omega$-regular games.
\begin{theorem}[\cite{Buchi_Landweber_synthesis}]\AP\label{theorem:finite-memory-determinacy}
	"Games" with "$\omega$-regular" conditions are "finite-memory" "determined".
\end{theorem}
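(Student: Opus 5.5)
The approach is to reduce an arbitrary $\omega$-regular game to a parity game, using a deterministic automaton for the objective as the memory structure.

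First, fix an $\omega$-regular objective $W\subseteq C^\omega$. By McNaughton's theorem (for instance via Safra's construction), there is a deterministic parity automaton $\mathcal{A}=(Q,q_0,\delta,\Omega)$ over $C$ recognising $W$, with $Q$ finite and $\Omega\colon Q\to\{0,\dots,d\}$.

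Given a game $\G$ on an arena with vertex set $V$ and objective $W$, I would form the product arena on $V\times Q$. A vertex $(v,q)$ belongs to the owner of $v$. An edge $v\xrightarrow{c}v'$ of the arena induces the edge $(v,q)\to(v',\delta(q,c))$, and this product edge is coloured by the priority $\Omega(\delta(q,c))$. Because $\mathcal{A}$ is deterministic, plays of $\G$ from $v$ correspond bijectively to plays of the product game from $(v,q_0)$. A play of $\G$ is won by Eve exactly when the corresponding product play satisfies the parity condition, i.e.\ when the maximal (or minimal, by convention) priority seen infinitely often is even.

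Second, I would prove that parity games are positionally determined on arbitrary, possibly infinite, arenas. The proof goes by induction on the number of priorities, following Zielonka's construction. Let $p$ be the maximal priority and suppose it is even (the odd case is symmetric). Let $A$ be Eve's attractor to the vertices of priority $p$. For infinite arenas this attractor is computed as a transfinite union, and Eve's attractor strategy is positional, choosing an edge that decreases the ordinal rank. The complement of $A$ is a trap for Eve and has fewer priorities, so the induction hypothesis gives a positional partition of it into winning regions.

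If Adam wins nowhere in the subgame, then Eve wins everywhere. She plays positionally: her subgame strategy off $A$, the attractor strategy on $A$, and an arbitrary edge at priority-$p$ vertices. Either the play eventually stays in the subgame, or it visits priority $p$ infinitely often; in both cases Eve wins. Otherwise, Adam's winning region in the subgame is a trap for Eve in the whole game. We then remove Adam's attractor to this region and recurse, iterating transfinitely. Taking unions of the positional strategies, each defined on its own disjoint layer, gives positional strategies on both winning regions, and these regions cover $V\times Q$.

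Third, I would transfer positional strategies back to $\G$. A positional strategy $\sigma$ on $V\times Q$ becomes a strategy in $\G$ whose memory is the finite set $Q$. The memory is initialised to $q_0$ and updated by $\delta$ on each colour read, and the move at $v$ with memory $q$ is $\sigma(v,q)$. By the play correspondence above, this finite-memory strategy wins wherever $\sigma$ wins, and the same holds for Adam. Hence $\G$ is determined with finite-memory strategies for both players, using memory of size $|Q|$.

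The main obstacle is the second step: positional determinacy of parity games on infinite arenas. The delicate points are the transfinite attractors and the fact that the recursive decomposition must be iterated transfinitely while the positional strategies stay coherent across layers. Finite arenas alone would not suffice, since the paper considers arbitrary arenas.
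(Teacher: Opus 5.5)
The paper gives no proof of this statement. It imports it as a black box from B\"uchi and Landweber, and uses it only once, in Lemma~\ref{lemma:posi-union-gen-hard}, to obtain a finite winning strategy on a finite Eve-only arena. Your argument is the standard modern route, and it is correct: take the product with a deterministic parity automaton, solve it by positional determinacy of parity games via Zielonka's recursion, then read a positional strategy on $V\times Q$ back as a strategy with memory $Q$.

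Compared with the bare citation, your route buys three things:
\begin{itemize}
\item an explicit memory bound $|Q|$;
\item a single strategy that wins from every vertex of the winning region when started in memory state $q_0$;
\item coverage of the arbitrary countable, possibly infinitely branching arenas that the paper's definitions allow.
\end{itemize}
B\"uchi and Landweber's original result concerns finite game graphs (Church's problem). The infinite-arena version is usually credited to Gurevich and Harrington, via latest appearance records, which is the other classical route. Conversely, the paper only needs finite Eve-only arenas, where your parity-game step reduces to the easy one-player finite case.

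Three points should be tightened.
\begin{enumerate}
\item The map from plays of the game to plays of the product is surjective and preserves winning. It is not a bijection when two parallel edges with colours $c\neq c'$ satisfy $\delta(q,c)=\delta(q,c')$. This is harmless, or you can label product edges by $(c,\Omega(\delta(q,c)))$.
\item You colour edges but then speak of ``vertices of priority $p$''. Give $(v,q)$ the priority $\Omega(q)$: the edge priorities of a play are then its vertex priorities shifted by one, which the parity condition does not see.
\item At limit stages of the outer transfinite iteration, you must remove Adam's attractor of the union of the layers removed so far, not just the union itself. With infinite branching, an Eve vertex can have its successors spread over infinitely many layers, none of which contains all of them. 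Such a vertex survives every stage but becomes a sink of the remaining arena. Adam wins from it, so adding it to his region is correct.
\end{enumerate}
With these adjustments your proof is complete.
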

The second one is one of the recent deep results of Casares and Ohlmann that characterise one-player positional determinacy for $\omega$-regular conditions:
\begin{theorem}[Theorem 3.1 in \cite{CO24Positional}, $1$-to-$2$ player lift]\AP\label{theorem:CO}
	An "$\omega$-regular language" which is "uniformly Eve-positional" over all finite "Eve-only arenas" is "Eve-positional" over all arenas.
\end{theorem}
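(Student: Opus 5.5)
The statement is the $1$-to-$2$ player lift of Casares and Ohlmann, which the paper imports as a black box. If I had to reprove it, I would route the argument through an intermediate automaton-theoretic notion: an \emph{$\varepsilon$-complete} (monotone) parity automaton recognising $L$. By this I mean a deterministic-skeleton parity automaton whose states are totally preordered by a relation compatible with the transitions. It is additionally equipped with $\varepsilon$-transitions towards smaller-or-equal states, labelled by priorities, in such a way that the recognised language is still exactly $L$. The proof then splits into two implications.
\begin{asparaenum}
\item Uniform positionality over finite Eve-only arenas yields such an automaton.
\item Such an automaton yields positionality over arbitrary two-player arenas.
\end{asparaenum}

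For the second implication, given an arbitrary arena $G$ with objective $L$, I would consider the product of $G$ with the $\varepsilon$-complete automaton. This product is a parity game, hence positionally determined for both players. Because states are totally preordered and $\varepsilon$-transitions may always go downward, for each vertex $v$ of $G$ there is a maximal state $q_v$ such that Eve wins the product from $(v,q_v)$. Monotonicity gives that winning from $(v,q)$ implies winning from $(v,q')$ for all $q'\stateleq q$. Eve's positional strategy in $G$ is obtained by fixing, at each vertex $v$, the move prescribed by a positional winning strategy of the product at $(v,q_v)$. Along a play she maintains the invariant that the current automaton state is at most $q_v$, using an $\varepsilon$-jump downward when necessary. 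The parity condition of the product then guarantees that the produced word is in $L$. This step essentially reproduces Ohlmann's result that monotone well-founded universal graphs imply half-positionality. Infinite arenas are handled because the order is well-founded, being on finitely many states.

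For the first implication, I would build the states from the residuals $u^{-1}L$ of $L$. Positionality on the simple two-branch arena (Eve choosing between continuing as $w$ or $w'$ after histories $u,u'$) forces the residuals to be totally ordered by inclusion, which gives the preorder. The transitions are those of the minimal deterministic residual automaton, with parity acceptance obtained from finite-memory determinacy (Theorem~\ref{theorem:finite-memory-determinacy}) and the $\omega$-regularity of $L$. The hard part is to define the $\varepsilon$-transitions and their priorities so that adding them does not enlarge the language. I would proceed by induction on the parity index. I would isolate, inside each residual class, the cycles that are accepting versus rejecting. Then I would use positionality on finite one-player arenas built from lassos $uv^\omega$, $u(vv')^\omega$ to show that the cycle structure inside a class is ``chain-like''. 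This would allow me to peel off the top priority level, with $\varepsilon$-edges labelled by that priority pointing into strictly smaller classes, and recurse.

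Uniformity, meaning a single positional strategy winning from all winning vertices, is what makes the chosen $\varepsilon$-edges consistent across different starting states. I expect this construction of the priority structure on $\varepsilon$-transitions, and proving that it preserves the language, to be the main obstacle. I would attack it first by assuming the contrary: take a word accepted by the $\varepsilon$-completed automaton but not in $L$, and turn it into a finite Eve-only arena on which no uniform positional strategy exists.
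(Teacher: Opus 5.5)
The paper gives no proof of this statement; it imports it as a black box from Casares and Ohlmann. Your outline follows the architecture of their argument: positionality on finite one-player arenas $\Rightarrow$ an $\varepsilon$-completable deterministic parity automaton $\Rightarrow$ positionality on all arenas via Ohlmann's well-founded monotone universal graphs. As a proof, however, it has genuine gaps in both halves.

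\textbf{First implication.} This is where essentially all the difficulty lies, and it is not carried out. Moreover, the one concrete step you commit to is false: the residual automaton of $L$ in general admits no parity condition recognising $L$, even when $L$ is Eve-positional. Take $L=\{a,b\}^*(a^\omega\cup b^\omega)$. It is a Rabin condition, hence Eve-positional, and it is prefix-independent, so it has a single residual. But a one-state deterministic parity automaton gives each letter a single priority, so it cannot accept $a^\omega$ and $b^\omega$ while rejecting $(ab)^\omega$. The paper's own example $C^*(ac^*(bc^*ac^*)^*)^\omega$ behaves the same way: it is prefix-independent, yet $(aab)^\omega\in L$ and $(ab)^\omega\notin L$. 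Finite-memory determinacy of games does not produce an acceptance condition on a prescribed state space. So the states must in general refine the residuals. Constructing that refinement with its $\varepsilon$-transitions, and proving they do not enlarge the language, is exactly what you defer as ``the main obstacle''. The induction on the parity index and the lasso arenas are named, but no argument is given.

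\textbf{Second implication.} This fails as stated, for two reasons.

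First, your notion of $\varepsilon$-completeness is too weak. If the $\varepsilon$-edges may carry the worst odd priority, adding them towards states with smaller or equal residual never enlarges the language. So the two-state deterministic automaton for ``infinitely many occurrences of $aa$'' over $\{a,b\}$ qualifies. Yet this language is not Eve-positional: $(baab)^\omega$ is in it while $(ba)^\omega$ and $(ab)^\omega$ are not, violating the third local preference property. The priorities on the $\varepsilon$-edges are the crux, and the notion that works constrains them priority level by priority level.

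Second, the argument itself does not go through. With $\varepsilon$-edges from $q$ down to $q'\sqsubseteq q$, winning from $(v,q')$ implies winning from $(v,q)$, not the converse. The winning states at $v$ therefore form an up-set, the maximal one carries no information, and you would need the least one. Even with the least winning state $q_v$, consider the run you splice together: one step of the product strategy from $(v,q_v)$, then an $\varepsilon$-jump down to $q_{v'}$. This run is not consistent with any fixed winning strategy of the product game. Hence the parity condition of the product gives no control over the priorities it sees infinitely often.

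This is precisely why Ohlmann's theorem is phrased in terms of well-founded monotone universal graphs. Each vertex $v$ is assigned a minimal pair (automaton state, parity signature) in a lexicographic order. That order combines the level-wise structure given by $\varepsilon$-completeness with ordinal progress measures. The assignment is then a graph morphism from the positional strategy into a graph all of whose infinite paths are accepting.
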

Note that in \cite{CO24Positional}, the definition of positionallity used is in fact implicitly uniform positionality.
"Uniform Eve-positionality" is stronger than "Eve-positionality": it means that a single positional strategy is sufficient for winning from all the vertices in the winning region.

\subsection*{Related works}

\noindent \textbf{Eve-positionality}
The "Eve-positionality" of some central classes was established as early as the 90s, such as parity languages \cite{Emerson_Jutla_parity} and Rabin languages \cite{RabinPosi}.
The first thorough analysis of "Eve-positionality" was undertaken by Kopcziński (see \cite{Kopczynski06}).
He notably established that it is decidable whether a prefix-independent "$\omega$-regular language" is "Eve-positional"  \cite{Kopczynski06,Kopczynski07}, albeit with a quite slow procedure (in $O(n^{O(n^2)})$), and is not very informative on the reason why the language is "Eve-positional".
More generally, despite his breakthroughs, he did not obtain any general characterisation of "Eve-positional" languages.

Some partial characterisations were established in the following decade on specific subclasses of "$\omega$-regular languages". Colcombet, Fijalkow, and Horn established a semantic characterisation of "Eve-positionality" for safety languages \cite{CFK_safety_posi}, that is, languages where it is known in finite time whether some word is out of the language.
A characterisation for a larger class was actually obtained a few years later by Bouyer, Casares, Randour, and Vandenhove \cite{Bouyer_Buchi_posi}, as they obtained a semantic characterisation of "Eve-positional" languages recognised by deterministic Büchi automata. This, however, does not correspond to the general $\omega$-regular case, and for instance fails to cover the deterministic coBüchi case.

Recently, using the well-monotonic universal graphs initially developed by Ohlmann \cite{Ohlmann_universal_graphs}, he and Casares gave several other characterisations for "Eve-positionality" \cite{CO24Positional}.
They also give a polynomial algorithm which, given a deterministic parity automaton as input, decides whether the language it defines is "Eve-positional".
They notably established that an "$\omega$-regular language" $L$ is "Eve-positional" over all arenas (potentially infinite and containing $\varepsilon$-moves) if and only if $L$ is "Eve-positional" over finite $\eps$-free Eve-only arenas (see \Cref{theorem:CO} above).
These articles provide the first characterisation achieved over all "$\omega$-regular" languages. However, they require one to reason with an automaton recognising the language, rather than directly with the language and inclusion properties.

\subsection*{Structure of the paper}

Necessary definitions are presented in \Cref{sec:defs}. \Cref{thm:local-posi}  is proved in \Cref{sec:charac}.

%% file: defs-c.tex

\subsection{Words and languages}
\label{subsection:words}

For a set A, its powerset is denoted $\cP(A)$.

\AP An ""alphabet"" is a finite set $C$, whose elements are called letters.
The sets $C^*$ and $C^\omega$ respectively denote the sets of finite ""words"" and infinite \reintro*\kl{words} of length $\omega$ over $C$ (also denoted ""$\omega$-words""). Subsets of $C^*$ and $C^\omega$ are called languages. 

In this article, we are interested in ""$\omega$-regular languages"", which can be defined in many equivalent ways. They are notably the languages that can be recognized by automata over infinite words.

An "$\omega$-word" is ""ultimately-periodic"" if it is of the shape $u v^\omega$, with $u$ and $v$ finite words, and $v$ is non-empty.
The set of non-empty finite words is denoted $C^+$, and the ""empty word"" is denoted $\intro*\eps$.

\AP The length of a finite "word" $u$ is written $|u|$. For a (possibly infinite) word $u$ and $n\in \NN$, $u_n$ denotes its letter at index $n$.

For a function $f: A \to B$ and $b\in B$, we denote by $f\intro*\invImage b $ the set $\{a \in A \mid f(a) = b\}$.

\subsection{Graphs and morphisms}
\label{subsection:graphs}

\AP \textbf{Graphs.} A ""directed labelled graph"" is a tuple $G = (V, C, E)$ (or simply a \reintro*"$C$-graph" or a \reintro*"graph" if there is no ambiguity), where $V$ is a countable set of ""vertices"", $C$ is an "alphabet" of ""labels"", and $E \subseteq V \times C \times V$ is a set of ""edges"". It is said ""finite@@graph"" if both $V$ and $E$ are finite.

\AP The ""$c$-edge from vertex $v$ to vertex $v'$"", if it exists, is $(v,c,v')\in E$. We sometimes denote the existence of such an "edge" by $v \trans{c} v'$, possibly omitting $c$ if it is not relevant.
For a given $v\in V$, edges of the form $v\intro*\trans{}v'$ are called the outgoing edges of $v$, and the set of its outgoing edges is denoted $\intro*\Out(v)$. A ""sink"" is a "vertex"~$v$ such that $\Out(v) = \emptyset$. A "graph" is ""sinkless"" if it does not contain any "sink". By default, all the graphs we consider are "sinkless".

\AP \textbf{Paths.} A ""path"" in $G$ is a (possibly infinite) sequence of edges $\pi = (v_i, c_i, v_{i+1})_{i < \alpha}$ with matching endpoints. We say that $\pi$ starts in $v_0$, visits the vertices ${v_i \mid i < \alpha}$, and has for ""label@path"" the word $u = (c_i)_{i< \alpha}$. 
If $\alpha$ is finite, we say that $\pi$ is from $v_0$ to $v_\alpha$, and denote the existence of a finite "path" by $v_0 \intro*\transpath{u} v_\alpha$. We can similarly omit $u$ if it is not relevant.

\AP \textbf{Morphisms.} Given two "$C$-graphs" $G = (V,C,E)$ and $G' = (V',C,E')$, a ""morphism@@graph"" from $G$ to $G'$ is a map $\phi:V \to V'$ such that
\[
v \transsub c G v'\ \text{ implies }\ \phi(v) \transsub c {G'} \phi(v').
\]
\AP If such a "morphism" exists, we say that $G$ ""maps into"" $G'$, or that $G'$ \reintro*"embeds" $G$. Note that this property extends over "paths", therefore each "path" in $G$ can then similarly be mapped to a corresponding "path" in $G'$ of same "label@@path".

\subsection{Games and strategies}
\label{subsection:games}

We describe here the model of games that we use: infinite duration game with perfect information between two antagonistic players (""Eve"" and ""Adam""), played on a graph.

\AP A ""$W$-game"" (or simply a \reintro*"game" when there is no ambiguity)  is a tuple $\G = (V_E, V_A, C, E, W)$ in which
\begin{itemize}
	\item $V_E$ and $V_A$ are disjoint sets of vertices; the vertices in $V_E$ are ""controlled by Eve"" and the ones in $V_A$ are ""controlled by Adam"";
		we set $V = V_E \uplus V_A$
	\item $G:= (V, C, E)$ is a "sinkless graph" called its ""arena"", and
	\item $W\subseteq C^\omega$ is a ""winning condition"" (for Eve).
\end{itemize}
\AP A "game" is ""Eve-only"" (resp. \reintro*"Adam-only") if $V_A = \emptyset$ (resp. $V_E=\emptyset$). 
\AP Given some vertex~$v_0$, a ""play from~$v_0$"" is an infinite path in $G$ that starts in vertex~$v_0$.
The ""label of the play"" is the word in $C^\omega$ obtained by projecting out the states and keeping the "label" of "edges" only.
A "play" is ""winning@@play"" if the "its label@label@play" belongs to $W$.

\AP Intuitively, a "game" is played as follows: starting from some "vertex"~$v_0$, successive edges, called ""moves"", are played, where the player "controlling" the current vertex $v$ chooses a "move" $(v,c,v') \in E$, and the process continues from vertex $v'$. 
This interaction continues forever, producing a "play" $\rho$. The goal of "Eve" is to guarantee that this "play" is "winning@@play" (whatever is the choice of the opponent). If this is the case, "Eve wins from~$v_0$". This intuition is made formal using the notion of "strategy".

\begin{example}
It is customary to denote the "vertices controlled by Eve" as circles, and the ones "controlled by Adam" as squares.
"Moves" are represented by edges, annotated by their label. Consider for instance the following "$W$-game", where $W$ is the set of words over $\{a,b,c,d\}$ such that either there are finitely many $a$'s, or there are infinitely many $b$'s and infinitely many $c$'s:
\begin{center}
\begin{tikzpicture}[baseline={(0,0)},node distance=1cm, empty/.style={}, player0/.style={circle, draw=blue!60, fill=blue!20, minimum size=.3cm}, player1/.style={rectangle, draw=blue!60, fill=blue!20, minimum size=.3cm},>=Stealth]
  \node[empty] (i) at (.4,0) {};
  \node[player1] (u) at (1,0) {};
  \node[player0] (center) at (2.5,0) {};
  \node[empty] (e0) at (3.5,.5) {};
  \node[empty] (e1) at (3.5,-.5) {};
  \path[->] (u) edge [bend left] node [above]  {$a$} (center);
  \path[->] (center) edge [bend left] node [below]  {$d$} (u);

  \tikzset{every loop/.style={min distance=0.8mm,in=140,out=60,looseness=8}}
  \path[->] (u) edge [loop above] node               {$c$} ();

  \tikzset{every loop/.style={min distance=0.8mm,in=-140,out=-60,looseness=8}}
  \path[->] (u) edge [loop below] node               {$b$} ();

  \tikzset{every loop/.style={min distance=0.8mm,in=40,out=120,looseness=8}}
  \path[->] (center) edge [loop above] node               {$c$} ();

  \tikzset{every loop/.style={min distance=0.8mm,in=-40,out=-120,looseness=8}}
  \path[->] (center) edge [loop below] node               {$b$} ();

\end{tikzpicture}
\end{center}
In this game, "Eve wins" from both vertices. One way to achieve this is as follows: as soon as the play reaches the right vertex ("controlled by Eve") Eve plays the move labelled $c$, then the one labelled $b$. This is a winning strategy since whatever Adam decides to do, either he stays ultimately in the left vertex, and only finitely many $a$'s will have been produced (hence the "play is winning"), or it visits infinitely many times the right vertex, and the strategy of Eve guarantees that infinitely many $b$'s and infinitely many $c$'s are encountered (again, the "play is winning").
\end{example}


\AP \textbf{Strategies.} Let $\G = (V_E,V_A, C, E, W)$ be a "game". An ""Eve-strategy"" is a tuple $\S = (S, \sigma)$ where $S$ is a "sinkless graph" and $\sigma$ is a "morphism@@graph" from $S$ to $G$ such that for all "vertices"~$s$ of~$S$, 
	\begin{itemize}
		\item if $\sigma(s)$ is "controlled by Adam", then for all "moves" of the form $(\sigma(s),c,v')\in E$, there is an edge of the form $(s,c,s')$ in~$S$ such that $\sigma(s')=v'$;
		\item if $\sigma(s)$ is "controlled by Eve", then there exists exactly one edge of the form $(s,c,s')$ in~$S$.
	\end{itemize}
	
Note that the morphism~$\sigma$ sends infinite "paths" in~$S$ to "plays" in~$G$.

\AP For $s_0$ a "vertex of~$S$", the "strategy"~$\S$ is ""winning@@strategy"" from~$s_0$ if all "paths" in~$S$ from~$s_0$ are sent to "winning plays" by $\sigma$.
""Eve wins from~$v_0$"" if there exists a "winning strategy from~$s_0$" with $\sigma(s_0)=v_0$.
\AP The ""winning region (for Eve)"" is the set of "vertices~$v$" of~$G$ such that "Eve wins from~$v$".
\AP A game is said ""determined"" if the union of the "winning regions" of the two players covers the whole "arena". Said differently, for all vertices~$v$, one of the two players "wins from $v$".

\AP \textbf{Memory and positionality.}
\AP A "strategy"~$\S = (S, \sigma)$ is of ""finite memory"" if $\sigma^{-1}(v)$ is finite for all vertices~$v$.
The ""memory of the strategy"" is the least natural number~$k$ (if it exists) such that $|\sigma^{-1}(v)|\leqslant k$ for all "vertices"~$v$.
The "strategy" is ""positional@@strategy"" if it is of "memory"~$1$.

\AP Given a vertex~$v_0$, ""Eve wins positionally from~$v_0$"" if there exists a "positional strategy" $(\S,\sigma)$ which is "winning from~$s_0$" for some (the only) $s_0\in\sigma^{-1}(v_0)$.
\AP A "game" is ""uniformly positionally won by Eve"" if there exists a "positional strategy" $(S, \sigma)$ such that for all $v$ in the "winning region for Eve", $\S$ "wins from~$v$" for some (the only) $v\in\sigma^{-1}(s)$.

\AP Given a language $L\subseteq C^\omega$, it is ""Eve-positional"" over a class of games if, for all games in the class, and all "vertices"~$v$, if "Eve wins from~$v$" then "Eve wins positionally from~$v$" for a ""uniformly Eve-positional"" over a class games, if all the "$L$-games" in the class are "uniformly positionally won by Eve".
We shall use this definition for the class of finite "Eve-only games".

The \Cref{theorem:finite-memory-determinacy} thus says that not only are games with "$\omega$-regular" "winning conditions" "determined", but the corresponding "winning strategies" can be chosen to have "finite memory".

\begin{remark}[Eve-strategies induce an Adam-only game]\label{rem:Adam-only-game}
	We observe that an "Eve-strategy" $\S = (S, \sigma)$ can be seen as a "Adam-only game"~$G'$ by setting all the vertices to be "controlled by Adam".
	In this case, the strategy $\S$ is "winning from~$s$" if and only if all the infinite "paths from~$s$" in~$\S$ have their "label@@path" in~$W$, if and only if "Eve wins from $s$ in~$G'$".	
\end{remark}

%% file: characterization.tex

The objective of this section is to prove \Cref{thm:local-posi}, that we recall now:
\EveposiLocal*

Le us first give an example of a language that is not known to be "Eve-positional" from the literature (it is not a safety language, nor a Rabin condition), and show how to use \Cref{thm:local-posi} for proving it.
\begin{example}
	
	Consider, over the alphabet $C = \{a,b,c\}$, the language of  infinite words that
		``contain infinitely many occurrences of the factor $ac^*a$ and finitely many occurrences of the factor $bc^*b$.''
	Using the syntax of $\omega$-regular expressions, this could be expressed at
	\[L=C^*\,(a\,c^*(b\,c^*a\, c^*)^*)^\omega\ .\] 
	Let us show, using \Cref{thm:local-posi} that $L$ is "Eve-positional": we have to prove that it satisfies the three "local preference properties". 
	 
	 \AP Note first that $L$ is ""prefix independent"", which means that $u\in L$ if and only if $vu\in L$ for all~$v\in u^*$.
	 As a consequence, the two first "local preference properties"  immediately hold.
	 	
	For the third "local preference property", since $L$ is "prefix independent", we have show that for all $v,v'\in C^+$ such that $(vv')^\omega \in L$, then either $v^\omega\in L$ or ${v'}^\omega\in L$. 
	Let us fix ourselves two such words $v,v'\in C^+$ such that $(vv')^\omega\in L$. 
	We proceed by case distinction.
	
	Consider first the case $v'\in C^+$ (resp. $v\in C^+$), then $(vv')^\omega\in L$ if and only if $v^\omega\in L$ (resp. ${v'}^\omega\in L$). Hence the third "local preference property" holds.
	
	Otherwise, let $\delta_1$ be the first non-$c$ letter occurring in $v$, $\delta_2$ be the last one, and let $\delta'_1$ and $\delta'_2$ be defined similarly from~$v'$. 
	Let us remark that if $\delta_1=b$ then $\delta'_2=a$. Indeed, otherwise $v'v$ would contain a factor $bc^*b$, and this would contradict $(vv')^\omega\in L$. In the same way, $\delta_2=b$ implies $\delta'_1=a$, $\delta'_1=b$ implies $\delta_2=a$ and $\delta'_2=b$ implies $\delta_1=a$. The remaining cases are the following:
	\begin{itemize}
	\item If $\delta_1=\delta_2=a$ (or symmetrically if $\delta'_1=\delta'_2=a$), then $v^{\omega}$ contains infinitely many occurrences of $ac^*a$. Furthermore, since $(vv')^\omega$ does not contain the factor $bc^*b$, $v^\omega$ does not either. The third "local preference property" is established.
	\item If $\delta_1=\delta_2=b$, then from the above remarks $\delta'_1=\delta'_2=a$. The previous case applies.
	\item If $\delta_1=a$ and $\delta_2=b$ (or symmetrically if $\delta'_1=a$ and $\delta'_2=b$). Then $\delta'_1=a$. If one would have $\delta'_2=a$, this would be treated in the first case. Hence $\delta'_1=b$.
	In this case, since $ac^*a$ occurs in $(vv')^\omega$, it has to occur either in~$v$ or in~$v'$. In the first case, this means that $v^\omega$ contains infinitely many occurrences of~$ab^*a$, and no occurrences of $bc^*b$, and the third "local preference property" is established. The argument identical for $v'$.
	\item If $\delta_1=b$ and $\delta_2=a$ (or symmetrically if $\delta'_1=b$ and $\delta'_2=a$), the previous argument also holds.
	\end{itemize}

	We thus have proved that $L$ satisfies the thhree "local preference properties", and hence it is "Eve-positional" by \Cref{thm:local-posi}.
\end{example}

We now have to esablish \Cref{thm:local-posi}.
We first show below, in \Cref{lemma:posi-union-gen-easy}, that the "local preference properties" form a necessary condition for a language to be "Eve-positional".
We then establish in \Cref{corollary:difficult-direction} the other direction, ie that if $L$ satisfies the "local preference properties", then it is "Eve-positional" over all arenas.

\begin{lemma}\label{lemma:posi-union-gen-easy}
	If $L\subseteq C^\omega$ is "Eve-positional" over all "finite@@graph", "Eve-only" "$L$-games", then $L$ satisfies the "local preference properties".
\end{lemma}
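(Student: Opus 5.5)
The plan is to refute each \emph{local preference property} separately: build a small Eve-only "$L$-game" whose only branching vertices are "controlled by Eve", and in which every "positional@@strategy" play from the relevant starting vertex has one of the two labels the property offers. By contraposition, if neither alternative lies in $L$, Eve wins with memory but not positionally, which contradicts the hypothesis.

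\textbf{Property 3} is the cleanest case and uses a genuinely finite arena. Take an initial vertex $a$ and a path labelled $u$ from $a$ to an Eve vertex $y$. From $y$ there are two cycles back to $y$, one labelled $v$ and one labelled $v'$; the intermediate vertices have a single outgoing edge, which is possible since $v,v'\in C^+$. Eve wins from $a$ by alternating the two cycles, producing $u(vv')^\omega\in L$. A "positional strategy" fixes a single choice at $y$, so the only play from $a$ has label $uv^\omega$ or $uv'^\omega$. Hence one of these is in $L$.

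\textbf{Property 2} uses a similar arena. There is a path $u$ from $a$ to $y$, a cycle labelled $v$ from $y$ back to $y$, and an exit from $y$ reading $w$. Eve wins from $a$ by taking the cycle once and then exiting, which reads $uvw$. A positional strategy either always takes the cycle, giving $uv^\omega$, or exits immediately, giving $uw$.

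\textbf{Property 1} needs two starting points. Take vertices $a$ and $b$ with paths labelled $u$ and $u'$ into a common Eve vertex $y$, and two exits from $y$ reading $w$ and $w'$. Both $a$ and $b$ are in the "winning region", via $uw$ and $u'w'$ respectively. A single positional strategy that wins from both $a$ and $b$ fixes one exit at $y$. It thus yields either $uw,u'w\in L$ or $uw',u'w'\in L$, and in either case $uw'\in L$ or $u'w\in L$. This step needs the uniform version of positionality. I would invoke the paper's remark that the positionality used (as in \cite{CO24Positional}) is implicitly uniform. Without uniformity, separate strategies from $a$ and from $b$ could make different choices at $y$, and the argument would fail.

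\textbf{The main obstacle} is that $w$ and $w'$ are arbitrary "$\omega$-words", so the exits reading them are infinite chains rather than finite arenas. I would handle this in one of two ways. The first is to observe that chain vertices have a single outgoing edge, so positional and arbitrary strategies coincide there, and finiteness plays no role in the argument. The second, assuming $L$ is "$\omega$-regular" as in \Cref{thm:local-posi}, is to first establish the properties for "ultimately-periodic" $w,w'$, where the chains can be folded into finite lassos. One would then transfer to arbitrary $w,w'$ by replacing $w$ with an ultimately periodic word that has the same behaviour with respect to a recognising automaton. I would present the first option and mention the second as the way to stay within finite arenas.
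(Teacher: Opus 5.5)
\textbf{Verdict: genuine gap.} Your arenas for properties 2 and 3 are the paper's, and the contrapositive reading of them is right. The gap is in the option you chose to present for arbitrary $w,w'$. The hypothesis says nothing about arenas containing an infinite chain reading $w$, so it cannot be invoked there. That chain vertices have a single successor does not change this, and finiteness is not cosmetic. Take $x\in\{a,b\}^\omega$ not ultimately periodic with infinitely many $a$'s, and let $L$ be the set of words with infinitely many $a$'s, minus $x$. On a finite Eve-only arena, a uniform positional B\"uchi strategy produces only lassos, whose labels are ultimately periodic and hence lie in $L$. So $L$ is even uniformly positional over finite Eve-only arenas. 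Yet $u=\varepsilon$, $v=b$, $w=x$ violates property 2: $bx\in L$ (as $bx\neq x$), while $b^\omega\notin L$ and $x\notin L$. Your first option would therefore ``prove'' a false instance.

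The fix is what you list as the fallback; it is the paper's argument. It requires $L$ to be $\omega$-regular, which the paper's proof also uses although the statement omits it. Make the transfer explicit rather than appealing to ``same behaviour''. A violation of property 1 puts $w$ in $\{z : uz\in L,\ u'z\notin L\}$ and $w'$ in $\{z : u'z\in L,\ uz\notin L\}$. Both sets are $\omega$-regular and nonempty, hence contain ultimately periodic words. Since the constraints on $w$ and on $w'$ are independent, you may replace both. For property 2 use $\{z : uvz\in L,\ uz\notin L\}$. The exits then fold into finite lassos.

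\textbf{Property 1 takes a different route from the paper.} The paper roots the arena at an Adam vertex offering $u$ and $u'$. It thus uses plain (non-uniform) positionality, but on an arena that is not Eve-only. You stay Eve-only with two initial vertices and pay with uniformity. Your caveat is correct: with non-uniform positionality on Eve-only arenas, property 1 does not follow. Consider the words over $\{a,b\}$ whose first letter occurs infinitely often. This language is won positionally from every winning vertex of a finite Eve-only arena. If there is a simple cycle through the vertex, follow it. Otherwise the vertex is never revisited, so take the first move of a winning play and then a positional B\"uchi lasso. Yet $a^\omega$ and $b^\omega$ belong to it, while $ab^\omega$ and $ba^\omega$ do not. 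So one of the two strengthenings, uniformity or an Adam vertex, is unavoidable, and the statement as literally phrased needs it.
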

\begin{proof}
	In the following, for $u\in  C^*$, we denote by $G_u$ the finite "$ C$-graph" consisting of a single thread of vertices, where the only path has label $u$. In a similar way, for $w\in C^\omega$ an "ultimately-periodic" word, we denote $G_w$ some finite "$ C$-graph" that has a single infinite path, labelled by~$w$.
	Note that as $L$ is "$\omega$-regular", it is uniquely characterised by its "ultimately-periodic" words, and we can thus suppose that the $w,w'$ implied in the "local preference properties" are ultimately periodic in this proof.
	For $u,u'\in C^*$, $v,v'\in C^+$ and "ultimately-periodic" $w,w'\in C^\omega$, we consider the following three "$L$-games": \\
	
	\begin{center}
\begin{tikzpicture}[baseline={(0,0)},node distance=1cm, empty/.style={}, player0/.style={circle, draw=blue!60, fill=blue!20, minimum size=.3cm}, player1/.style={rectangle, draw=blue!60, fill=blue!20, minimum size=.3cm},>=Stealth]
  \node[empty] (i) at (.4,0) {};
  \node[player1] (u) at (1,0) {};
  \node[player0] (center) at (2.5,0) {};
  \node[empty] (e0) at (3.5,.5) {};
  \node[empty] (e1) at (3.5,-.5) {};
  \path[->] (i) edge   (u);
  \path[->] (u) edge [bend left] node [above]  {$u$} (center);
  \path[->] (u) edge [bend right] node [below]  {$u'$} (center);
  \path[->] (center) edge [bend left] node [above ]  {$w$} (e0);
  \path[->] (center) edge [bend right] node [above ]  {$w'$} (e1);
\end{tikzpicture}
\quad
\begin{tikzpicture}[baseline={(0,0)},node distance=1cm,  empty/.style={}, player0/.style={circle, draw=blue!60, fill=blue!20, minimum size=.3cm}, player1/.style={rectangle, draw=blue!60, fill=blue!20, minimum size=.3cm},>=Stealth]
\node[empty] (i) at (.4,0) {};
\node[player1] (v0) at (1,0){};
\node[player0] (center) at (2,0)  {};
\node[empty] (e) at (3,0)  {};
\path[->] (i) edge (v0);
\path[->] (v0) edge  node [above ]  {$u$} (center);
\tikzset{every loop/.style={min distance=0.8mm,in=0,in=125,out=50,looseness=8}}
\path[->] (center) edge [loop above] node  {$v$} ();
\path[->] (center) edge node  [above ]  {$w$} (e);
\end{tikzpicture}
\quad
\begin{tikzpicture}[baseline={(0,0)},node distance=1cm, empty/.style={}, player0/.style={circle, draw=blue!60, fill=blue!20, minimum size=.3cm}, player1/.style={rectangle, draw=blue!60, fill=blue!20, minimum size=.3cm},>=Stealth]
  \node[empty] (i) at (.4,0) {};
  \node[player1] (v0) at (1,0){};
  \node[player0] (v1) at (2,0)  {};
  \path[->] (i) edge   (v0);
  \path[->] (v0) edge  node [above ]  {$u$} (v1);
  \tikzset{every loop/.style={min distance=0.8mm,in=0,in=40,out=120,looseness=8}}
  \path[->] (v1) edge [loop above] node               {$v$} ();
  \tikzset{every loop/.style={min distance=0.8mm,in=0,in=-40,out=-120,looseness=8}}
  \path[->] (v1) edge [loop below] node               {$v'$} ();
\end{tikzpicture}
\end{center}

	
	Assume first that $uw\in L$ and $u'w'\in L$. This means that Eve wins the first game, whatever the first "move" chosen by Adam. Since $L$ is "Eve-positional", she does so "positionally@@strat". This means that either she wins by always choosing the up move from~$q$, and in this case $u'w\in L$, or she wins by always choosing the down move from~$q$, and in this case $uw'\in L$. Overall the first "local preference property" is established.
	
	Assume now that $uvw\in L$. This means that Eve wins the second game by going up the first time~$q$ is met, and then down the second time. Since $L$ is "Eve-positional", she wins "positionally@@strat". If this strategy consists in always choosing the up move from~$q$, we get $uv^\omega\in L$. If it consists in always choosing the down move from~$q$, we get $uw\in L$. Overall we have proved the second "local preference property".
	
	Finally, assume that $u(vv')^\omega\in L$. This means that Eve wins the third game by playing alternatively the up move from~$q$, then the downward move, etc.  Since $L$ is "Eve-positional", she does so "positionally@@strat". If she wins by always choosing the up move from~$q$, we get that $uv^\omega\in L$. If it consists in always choosing the down move from~$q$, we get $uv'^\omega\in L$. Overall the third "local preference property" is established.
\end{proof}

Let us now establish the other direction of \Cref{thm:local-posi}, ie that the "local preference properties" suffice for guaranteeing that an $\omega$-regular language is "Eve-positional".
We begin with some elementary remarks on the order $\resleqL$ over infinite words induced by the language $L$, that we introduce now.

\AP Let us denote $w \intro*\resleqL w'$ for $w,w'\in  C^\omega$ holds if  $uw\in L$ implies $uw'\in L$ for all~$u\in C^*$.

Note first that $w\resleqL w'$ implies $uw\resleqL uw'$ for all $u\in C^*$.
\begin{lemma}\label{lemma:leqslant}
	If~$L$ satisfies the "local preference properties", then:
	\begin{itemize}
		\item $\resleqL$ is a total preorder on~$ C^\omega$,
		\item either $vw\resleqL v^\omega$ or $vw\resleqL w$ for all~$v\in C^+$ and $w\in C^\omega$, and
		\item either $(vv')^\omega\resleqL v^\omega$ or  $(vv')^\omega\resleqL {v'}^\omega$, for all~$v,v'\in C^+$.
	\end{itemize}
\end{lemma}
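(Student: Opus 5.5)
Each item reduces to the corresponding local preference property, applied with an arbitrary prefix and combined with a contradiction argument that picks two witnessing prefixes.

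\emph{First item.} Reflexivity and transitivity of $\resleqL$ are immediate from the definition, since implication between memberships $uw\in L \Rightarrow uw'\in L$ composes. For totality, suppose that neither $w\resleqL w'$ nor $w'\resleqL w$ holds. Then there is $u\in C^*$ with $uw\in L$ and $uw'\notin L$. Likewise there is $u'\in C^*$ with $u'w'\in L$ and $u'w\notin L$. The first local preference property, applied to $uw\in L$ and $u'w'\in L$, yields $uw'\in L$ or $u'w\in L$, and both are excluded. Hence $\resleqL$ is a total preorder.

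\emph{Second item.} Fix $v\in C^+$ and $w\in C^\omega$, and suppose that neither $vw\resleqL v^\omega$ nor $vw\resleqL w$ holds. Then there is a prefix $u_1$ with $u_1vw\in L$ and $u_1v^\omega\notin L$, and a prefix $u_2$ with $u_2vw\in L$ and $u_2w\notin L$. Applying the second property with a single prefix gives only $u_1v^\omega\in L$ or $u_1w\in L$. This does not immediately contradict, because the two failures occur at different prefixes $u_1$ and $u_2$. This mixing of prefixes is the main obstacle.

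To handle it, I would first use totality. Either $v^\omega\resleqL w$ or $w\resleqL v^\omega$; say $w\resleqL v^\omega$, the other case being symmetric. Then $u_2w\notin L$ combined with failure at $u_2$ must be contradicted. Applying the second property at prefix $u_2$ gives $u_2v^\omega\in L$ or $u_2w\in L$. The latter is excluded, so $u_2v^\omega\in L$. This alone is consistent, so I would instead use $u_1$. By the second property, $u_1vw\in L$ gives $u_1v^\omega\in L$ or $u_1w\in L$. The first is excluded, so $u_1w\in L$. Since $w\resleqL v^\omega$, we get $u_1v^\omega\in L$, which is a contradiction. In the symmetric case $v^\omega\resleqL w$, the prefix $u_2$ gives $u_2v^\omega\in L$, hence $u_2w\in L$, again a contradiction. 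So the step is resolved by choosing which prefix to use according to the totality comparison between $v^\omega$ and $w$.

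\emph{Third item.} I would argue in the same way using the third property. Suppose the failures occur at $u_1$, with $u_1(vv')^\omega\in L$ and $u_1v^\omega\notin L$, and at $u_2$, with $u_2(vv')^\omega\in L$ and $u_2{v'}^\omega\notin L$. By totality, either $v^\omega\resleqL {v'}^\omega$ or the reverse. If ${v'}^\omega\resleqL v^\omega$, the third property at $u_1$ gives $u_1{v'}^\omega\in L$, hence $u_1v^\omega\in L$, a contradiction. If $v^\omega\resleqL{v'}^\omega$, the third property at $u_2$ gives $u_2v^\omega\in L$, hence $u_2{v'}^\omega\in L$, a contradiction.
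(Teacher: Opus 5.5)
Your proof is correct and follows essentially the same route as the paper: totality comes from the first property applied to two witnessing prefixes. The second and third items are proved by contradiction, using the totality comparison (between $v^\omega$ and $w$, resp.\ between $v^\omega$ and ${v'}^\omega$) to decide at which of the two witnessing prefixes to apply the corresponding local preference property; you also spell out reflexivity and transitivity, which the paper leaves implicit.
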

\begin{proof}
	First item. Assume that $\resleqL$ would not be a total preorder, this would mean that there exist $w,w'\in C^\omega$ such that $w \not\resleqL w'$ and $w'\not\resleqL w$. This would mean that there exists~$u,u'\in C^*$ such that $uw\in L$, $uw'\not\in L$, $u'w'\in L$ and $u'w\not\in L$. This contradicts the first "local preference property".
	
	Second item. Assume the conclusion would not hold, this would mean that there exist $u,u'\in C^*$ such that $uvw\in L$ yet $uv^\omega\not\in L$, and $u'vw\in L$ yet $u'w \not\in L$. Since $\resleqL$ is total, there are two cases: either $v^\omega \resleqL w$ or $w \resleqL v^\omega$. If $v^\omega \resleqL w$, we have that $u'v^\omega\not\in L$, and this contradicts the second "local preference property". If $w \resleqL v^\omega$, we have that $uw\not\in L$, and this contradicts again the second "local preference property".
	
	Third item. Assume the conclusion would not hold, this would mean that there exist $u,u'\in C^*$ such that $u(vv')^\omega\in L$, $uv^\omega\not\in L$, $u'(vv')^\omega\in L$ and $u'{v'}^\omega\not\in L$.
	According to the first item, either ${v'}^\omega \resleqL v^\omega$ or $v^\omega \resleqL {v'}^\omega$.
	In the first case, since $u{v}^\omega\not\in L$, we get $u{v'}^\omega\not\in L$.
	Overall, we have  $u(vv')^\omega\in L$, $uv^\omega\not\in L$ and $u{v'}^\omega\not\in L$, which is a contradiction to the third "local preference property". 
	The other case is symmetrical.
\end{proof}

\begin{lemma}\label{lemma:posi-union-gen-hard}
	If $L\subseteq  C^\omega$ is "$\omega$-regular" and satisfies the "local preference properties" then $L$ is "uniformly Eve-positional" over all "finite@graph" "Eve-only games".
\end{lemma}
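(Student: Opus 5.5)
We plan to argue by induction on the number of edges of a finite Eve-only $L$-game $G$ (all vertices controlled by Eve, sinkless). For each vertex~$v$, the set of labels of infinite paths from~$v$ is a nonempty set $P(v)\subseteq C^\omega$, and Eve wins from~$v$ iff $P(v)\cap L\neq\emptyset$. Since $G$ is finite, $P(v)$ is $\omega$-regular, and if it meets $L$ it contains an "ultimately-periodic" word of $L$. This word is realised by a lasso: a simple path followed by a cycle. By the first item of \Cref{lemma:leqslant}, $\resleqL$ is a total preorder. Because $L$ is $\omega$-regular, $\resleqL$ has finitely many classes. Indeed, $w\resleqL w'$ depends only on the sets $\{u : uw\in L\}$, which are regular and determined by the syntactic class of~$w$.

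The base case is when every vertex has out-degree one; it is trivial. For the inductive step, pick a vertex $q$ with at least two outgoing edges. For each such edge $e$, let $G_e$ be $G$ with every other edge out of $q$ removed. By induction each $G_e$ has a uniform positional winning strategy $\sigma_e$. We then want a single edge $e^*$ such that the winning region of $G_{e^*}$ equals that of~$G$; then $\sigma_{e^*}$ works for $G$.

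The key claim is the following. Let $W_e$ denote the winning region of $G_e$. Any winning path in $G$ from a vertex $x$ that visits $q$ infinitely often or at least once can be decomposed as $u\,v_1v_2\cdots$, where $u$ labels the path to the first visit of $q$ and each $v_i$ labels a return cycle to $q$. For a lasso-shaped winner we have $u\,v\,w\in L$ with $v$ a $q$-cycle and $w$ the continuation. Using the second local preference property, such a word can be reduced so that, from $q$, it either loops forever on one cycle or leaves $q$ for good; each of these options uses a single outgoing edge at $q$. When the cycle uses two different exits, so that $u(vv')^\omega\in L$, the third property lets us keep only one of them.

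To make the choice uniform over all starting vertices $x$, we compare continuations at $q$. For each $e$, let $B_e$ be a $\resleqL$-maximal word among the labels of paths from $q$ in $G_e$. Such a maximum exists because there are finitely many classes and ultimately-periodic witnesses suffice. Choose $e^*$ maximising $B_e$. By the first item of \Cref{lemma:leqslant}, $uB_{e}\in L$ implies $uB_{e^*}\in L$ for every prefix~$u$, and this holds simultaneously for all $x$, since totality of $\resleqL$ is exactly what makes the comparison independent of the prefix $u$. What remains is to show that the best continuation at $q$ in $G$ is $\resleqL$-equivalent to $\max_e B_e$.

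We expect this to be the main obstacle. A path in $G$ from $q$ may alternate between exits infinitely often, or cycle through $q$ several times using different exits before leaving. We intend to handle it by induction on the number of returns to $q$. A word $v w$, with $v$ a $q$-cycle, is dominated by $v^\omega$ or by $w$ via the second item of \Cref{lemma:leqslant}. A periodic word $(v_1\cdots v_k)^\omega$ of cycles is dominated by some $v_i^\omega$ via the third item, applied repeatedly by splitting the period. Each single-cycle word $v_i^\omega$ and each word that eventually avoids $q$ lives in some $G_e$. Combined with the reduction of arbitrary winning labels to ultimately-periodic ones, which is valid by $\omega$-regularity, this shows every label from $q$ is $\resleqL$-dominated by some $B_e$, and the lemma follows.
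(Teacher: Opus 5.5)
Your argument is correct, but it takes a different route from the paper.

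\textbf{The paper's route.} The paper invokes finite-memory determinacy (\Cref{theorem:finite-memory-determinacy}) to obtain a finite strategy $(S,\sigma)$ winning from the whole winning region $R$, and takes one with the fewest memory states. It then shows that two memory states $p\neq q$ over the same vertex can always be merged, by redirecting every edge entering one of them to the other. In an Eve-only arena each memory state has a unique outgoing path. So only the label $v$ from $p$ to $q$, the label $v'$ from $q$ to $p$, and the tails $w,w'$ ever need comparing. A four-way case split on reachability between $p$ and $q$ lets the items of \Cref{lemma:leqslant} apply directly, with no decomposition. Uniformity comes for free, since the strategy stays winning from all of $R$ throughout.

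\textbf{Your route.} You reduce the arena itself, deleting all but one outgoing edge at a branching vertex $q$. This costs you two things:
\begin{itemize}
\item You must handle arbitrary paths through $q$. You do this by cutting a lasso continuation from $q$ into first-return cycles $v_1\cdots v_k(v_{k+1}\cdots v_{k+m})^\omega$ and iterating the second and third items of \Cref{lemma:leqslant}.
\item Uniformity needs an extra finiteness argument (finitely many $\leqslant_L$-classes, or simply finitely many winning vertices) to select a single edge $e^*$ carrying a $\leqslant_L$-maximal continuation.
\end{itemize}
In return, your proof needs no finite-memory determinacy: lasso witnesses in a finite Eve-only arena suffice. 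It also shows that the positional strategy can be built greedily, one vertex at a time, by keeping the edge that starts a $\leqslant_L$-best continuation.

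\textbf{Two points to tighten.}
\begin{itemize}
\item The lasso realising an ultimately periodic word $uv^\omega$ need not consist of a simple path and a simple cycle. It comes from pigeonhole on the vertices reached at positions $|u|+k|v|$; simplicity is not needed anyway.
\item Your claim that every label from $q$ is dominated by some $B_e$ needs justification for labels that are not ultimately periodic. One option is to note that each $\leqslant_L$-class is $\omega$-regular, so the top class among labels from $q$ contains a lasso label. The simpler option is to restrict to continuations of ultimately periodic winning paths, chosen once per winning vertex. This is all the final argument actually uses.
\end{itemize}
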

\begin{proof}
	Let $G$ be an "$L$-game" played on a finite, Eve-only arena won by Eve, and $R$ be the "winning region" for Eve in~$L$.
	We have to show that there exists a "winning@@strategy" "positional strategy" for Eve that is winning from all positions in $R$.
	The key argument is to show that for all finite strategy~$\S=(S,\sigma)$ winning for Eve from~$R$, either $\S$ is positional, or there exists a smaller strategy for Eve (smaller meaning with less memory states) that is also winning from~$R$. Indeed,  this means that if one considers a smallest (in number of memory states) strategy~$\S$ winning for Eve from~$R$, it is positional.
	
	\AP In this proof, an ""initial path"" in the graph $S$ of a "strategy" $\S = (S,\sigma)$ is a path in~$S$, whose starting point is a vertex $v$ such that $\sigma(v)\in R$. Observe that $\S$ is "winning@@strategy" over $R$ if and only if all its "initial paths" are "winning@@play".
	
	Let us consider a finite "strategy"~$\S = (S,\sigma)$ "winning for Eve from@winning strategy"~$R$. Such a strategy exists by \Cref{theorem:finite-memory-determinacy}. Note that as $\S$ is a "strategy" for a game on an Eve-arena, each state $q\in S$ admits a unique path starting from $q$, consisting of the successive choices compatible with $\S$ in $G$.
	If $\S$ is "positional@@strategy", the statement is established.
	Otherwise, there exist two distinct memory states~$p$ and $q$ that correspond to the same position~$v$ in~$G$.
	We define two variants of~$\S$. The first, $\S_p = (S_p, \sigma_p)$, is identical but for the memory state~$q$ that is removed from $S$, and all edges in~$S$ reaching $q$ are redirected to~$p$. Note that as $p$ and $q$ are both mapped to a same vertex, this does not change the "initiality@initial path" on any "path".
	The strategy $\S_q$ is defined in a symmetrical way. 
	Note that both~$S_p$ and $S_q$ are strictly smaller than $S$.
	If either $\S_p$ or~$\S_q$ is winning, then the property is established.
	
	\emph{Case 1:} there are no paths from $p$ to $q$ and no paths from $q$ to $p$ in $S$. Let~$w\in  C^\omega$ be labelling of the unique path in~$S$ starting from~$p$, and let~$w'$ be labelling the path in~$S$ starting from~$q$.
	Then either~$w \resleqL w'$ or~$w'\resleqL w$ (since~$\resleqL$ is total). Without loss of generality, up to symmetry, we assume that $w \resleqL w'$. We show that $\S_q$ is winning. Indeed, let us consider an infinite "initial path". Either it is a path in~$S$, and in this case, it is winning since~$\S$ is winning.
	Or it consists of a "path" in~$S$ ending in~$q$ labelled by~$u\in  C^*$ followed by the path labelled by $w'$.
	Since~$uw\in L$ and $w\resleqL w'$, we obtain $uw'\in L$. The path is winning for Eve.
	
	\emph{Case 2:} there is a path from $p$ to $q$ in~$S$ and no paths from $q$ to $p$. Let~$v\in C^+$ be labelling the minimal path from~$p$ to~$q$, and $w$ be labelling the infinite path starting from~$q$. 
	Two cases can occur: either $v^\omega\resleqL w$ or $w\resleqL v^\omega$.
	First subcase: $v^\omega\resleqL w$. Let us show that $\S_q$ is winning. Let us consider an infinite "initial path" in~$S_q$. If it is a path in~$S$, then it is winning for Eve since~$\S$ is winning. Otherwise it consists of a path in~$S$ ending in~$p$, say labelled by~$u$ followed by $w$. Since~$\S$ is winning, $uvw\in L$. Therefore, since $L$ satisfies the "local preference properties", $uv^\omega \in L$ or $uw\in L$. As $v^\omega \resleqL w$, in either case, $uw\in L$ and the path is winning. 
	Otherwise, $w \resleqL v^\omega$. Let us show that $\S_p$ is winning. We observe that the only infinite path starting in $p$ is now labelled by $v^\omega$. We once more consider an infinite path in~$S_p$. If it is a path in $S$, it is immediately winning. Otherwise, it consists of a path in~$S$ ending in~$q$, say labelled by $u$, followed by the path labelled by $w$. Once again, as~$\S$ is winning, $uw \in L$. Therefore, as $w \resleqL v^\omega$, $uv^\omega \in L$, and the path is winning.
	
	\emph{Case 3:} there is no path from~$p$ to~$q$ and a path from $q$ to~$p$ in $S$. This is case 2, when the roles of~$p$ and $q$ are exchanged. 
	
	\emph{Case 4:} there is a path from~$p$ to~$q$ and a path from~$q$ to~$p$ in~$S$. 
	Let~$v$ be labelling the shortest path from~$p$ to~$q$, and $v'$ labelling the shortest path from~$q$ to~$p$.
	Up to symmetry, we assume that ${v'}^\omega \resleqL v^\omega$.
	Together with the third item of \Cref{lemma:leqslant}, we obtain that both $(vv')^\omega \resleqL v^\omega$ and $(v'v)^\omega \resleqL v^\omega$. Let us show that $\S_p$ is winning -- note that the only infinite path in $S_p$ starting in $p$ is labelled by $v^\omega$.
	Let us consider some "initial path" in~$S_p$.
	If it is a path in~$\S$ it is winning for Eve. 
	Otherwise, this means that it begins by an "initial path" in~$S$ ending in~$q$, say labelled by~$u$, followed by $v^\omega$. Since~$S$ is winning, this means that $u(v'v)^\omega\in L$. Since furthermore~$(v'v)^\omega\resleqL v^\omega$, we obtain $uv^\omega\in L$. The path is winning.	
\end{proof}

\begin{corollary}\label{corollary:difficult-direction}
	If an "$\omega$-regular language"~$L$ satisfies the three "local preference properties", then it is "Eve-positional" over all "arenas".
\end{corollary}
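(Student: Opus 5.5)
This corollary follows by chaining \Cref{lemma:posi-union-gen-hard} with the $1$-to-$2$ player lift of \Cref{theorem:CO}. Let $L$ be an "$\omega$-regular language" satisfying the three "local preference properties". By \Cref{lemma:posi-union-gen-hard}, $L$ is "uniformly Eve-positional" over all "finite@graph" "Eve-only games". This is exactly the hypothesis of \Cref{theorem:CO}, whose conclusion is that $L$ is "Eve-positional" over all arenas.

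Two points need care. First, the notion of positionality in \cite{CO24Positional} is implicitly the uniform one. This is why \Cref{lemma:posi-union-gen-hard} was stated with uniformity, so the hypotheses match exactly.

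Second, the conclusion of \Cref{theorem:CO} must give our (non-uniform) "Eve-positionality". A uniform positional strategy that wins from the whole "winning region" in particular wins positionally from each single vertex $v_0$ where "Eve wins from~$v_0$". So uniform positionality implies "Eve-positionality" over every arena, and the statement follows.

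We should also check the class of arenas. Casares and Ohlmann allow infinite arenas and $\eps$-edges, while our "games" are $\eps$-free and "sinkless". Ours therefore form a subclass of theirs, and positionality over the larger class restricts to ours.

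The only real content is the matching of definitions between this paper and \cite{CO24Positional}. The combinatorial work was already done in \Cref{lemma:posi-union-gen-hard}, together with \Cref{theorem:finite-memory-determinacy}, which supplies the initial finite-memory strategy.

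Combined with \Cref{lemma:posi-union-gen-easy}, this yields both directions of \Cref{thm:local-posi}.
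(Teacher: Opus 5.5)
Your proposal is correct and matches the paper's proof exactly: apply \Cref{lemma:posi-union-gen-hard} to obtain uniform Eve-positionality over finite Eve-only games, then invoke the $1$-to-$2$ player lift of \Cref{theorem:CO}. Your extra remarks, on uniform positionality implying the non-uniform notion and on the paper's arenas forming a subclass of those in Casares and Ohlmann, are accurate sanity checks that the paper leaves implicit.
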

\begin{proof}
	Let~$L\subseteq C^\omega$ be an "$\omega$-regular language" that satisfies the three "local preference properties".
	From \Cref{lemma:posi-union-gen-hard}, it is "uniformly Eve-positional" over all  finite "Eve-only games".
	By \Cref{theorem:CO}, this means that $L$ is "Eve-positional" over all "arenas".
\end{proof}